\definecolor{myurlcolor}{rgb}{0,0,0.4}
\definecolor{mycitecolor}{rgb}{0,0.5,0}
\definecolor{myrefcolor}{rgb}{0.5,0,0}
\newcommand{\beq}[0]{\begin{equation}}
\newcommand{\eeq}[0]{\end{equation}}
\newcommand{\one}{\leavevmode\hbox{\small1\normalsize\kern-.33em1}}
\def\be{\begin{equation}}
\def\ee{\end{equation}}
\def\ben{\begin{eqnarray}}
\def\een{\end{eqnarray}}
\def\eea{\end{array}}
\def\bea{\begin{array}}
\newcommand{\Tr}[1]{\mathrm{Tr}#1}
\newcommand{\bei}{\begin{itemize}}
\newcommand{\eei}{\end{itemize}}
\newcommand{\ket}[1]{|#1\rangle}
\renewcommand{\emph}[1]{\textbf{#1}}
\newtheorem*{rep@theorem}{\rep@title}
\newcommand{\newreptheorem}[2]{%
\newenvironment{rep#1}[1]{%
 \def\rep@title{#2 \ref{##1}}%
 \begin{rep@theorem}}%
 {\end{rep@theorem}}}
\theoremstyle{plain}
\newtheorem{thm}{Theorem}
\newtheorem*{thm*}{Theorem}
\newtheorem{defn}[thm]{Definition}
\newtheorem{assu}[thm]{Assumption}
\newtheorem*{udefn}{Definition}
\theoremstyle{definition}
\theoremstyle{remark}
\begin{document}

\title{An operational notion of classicality based on physical principles}
\author{Shubhayan Sarkar}
\email{sarkar@cft.edu.pl}
\affiliation{Center for Theoretical Physics, Polish Academy of Sciences, Aleja Lotnikow 32/46, 02-668 Warsaw, Poland}

\begin{abstract}
One of the basic observations of the classical world is that physical entities are real and can be distinguished from each other. However, within quantum theory, the idea of physical realism is not well established. A framework to analyse how observations in experiments can be described using some physical states of reality was recently developed, known as ontological models framework. Different principles when imposed on the ontological level give rise to 
different theories, the validity of which can be tested based on the statistics generated by these theories. Using the ontological models framework, we formulate a novel notion of classicality termed ontic-distinguishability, which is based upon the physical principles that in classical theories extremal states are physical states of reality and every sharp measurement observes the state of the system perfectly. We construct a communication task in which the success probability is bounded from above for ontological models satisfying the notion of ontic-distinguishability. Contrary to previous notions of classicality which either required systems of dimension strictly greater than two or atleast three preparations, a violation of ontic-distinguishability can be observed using just a pair of qubits and a pair of incompatible measurements. We further show that violation of previously known notions of classicality such as preparation non-contextuality and Bell's local causality is a violation of ontic-distinguishability.
\\
\\
\textbf{Keywords}: Ontological-framework;\ Classicality;\  Operational task;\  Quantum violation
\end{abstract}

\maketitle

\section{Introduction.}
The operational picture of physical theories aims at understanding the physical world in terms of statistics generated by experiments performed on physical systems. The novelty of such a picture relies on the idea that observable facts are enough to predict outcomes of some physical experiments without much emphasis on the underlying physical states of reality,  commonly referred to as ontological states. However, there are limitations of the operational picture, not just towards predictive power but also that such a picture lacks any deeper understanding of nature. From a realist perspective, any operational prediction needs to be supported by some ontological states of reality even if they might be unobservable or hidden. One of the most compelling beliefs that are held by most physicists is that physical entities in the classical world are real or a fact that is independent of observation. However, in any general theory, it might also be possible that observable facts are just a reflection of some underlying physical reality even when the physical reality can not be directly observed. Building on this viewpoint, Bell in 1964 \cite{Bell} could prove that quantum theory is inherently different from classical physics.

One of the interesting avenues in the foundations of quantum theory is to understand how quantum theory diverges from classical theories. To facilitate such understanding, different notions of classicality have been suggested, like Bell's local causality \cite{Bell}, Kochen-Specker non-contextuality \cite{KS, Cabello}, and preparation non-contextuality \cite{PNC} to name a few. All such notions of classicality differ based on the assumptions imposed on the ontological level. If quantum theory violates any notion of classicality, then such a notion of classicality can also be understood as a no-go theorem for quantum theory, i.e., quantum theory can not be compatible with all the assumptions which define the notion of classicality. For example, Bell's theorem shows that any ontological description of quantum theory must violate the principle of local causality. Apart from their foundational importance, all such notions of classicality have led to various tasks in cryptography, communication and computation, which show a quantum advantage over classical strategies. For example, the security of the well-known cryptographic scheme $E91$ protocol is based on Bell's theorem \cite{E91}. In fact, every device-independent scheme is based upon violations of some notion of classicality \cite{di1, Yang, Bamps, sarkar, Ivan}.

In this work, we construct a novel notion of classicality, which is termed ontic-distinguishability, based on the idea that any physical entity in the classical everyday world is real and all of the entities can be perfectly distinguished from each other. The notion is based upon two assumptions, namely no-overlap and strong duality. Both of which, when taken together, reflect the fact that in classical physics, extremal states are states of reality and any sharp measurement perfectly observers the state of the system. Then, we construct an operational task that gives an upper bound for theories satisfying ontic-distinguishability. This bound can be violated in quantum theory using only two pure qubit states and two incompatible measurements, suggesting that even two-dimensional systems in quantum theory can not be described classically. also We further show that the notion of ontic-distinguishability implies all the other known ideas of classicality like preparation non-contextuality, Kochen-Specker non-contextuality and Bell's local causality. In discussions, we briefly provide some arguments on why ontic-distinguishability could be a more fundamental notion of classicality than contextuality.

\section{Preliminaries}
Before moving on to results, we would like to introduce the notations and relevant concepts required for this work.

\subsection{Operational picture}
Any experiment performed on some physical system can be understood as prepare, transform and measure experiment. In each run of the experiment, a preparation (the prepared system) undergoes a transformation (some dynamical process), after which the system is measured. 
Finally, the experiment is repeated a large number of times to gather enough statistics. In any such experiment, it is always assumed that each run of the experiment is statistically independent of other runs and the preparation, transformation, and measurement remain consistent throughout the runs of the experiment.  

Any preparation in the operational picture is denoted by $P$, transformation by $T$ and measurements are denoted by $M$ with outcomes denoted by $k$. The statistics are denoted by $p(k|M,P,T)$, which specifies the probability of obtaining the outcome $k$ of some measurement $M$ when the preparation $P$ undergoes a transformation $T$. For different physical theories, the probabilities are obtained in different ways. For example, in quantum theory, the preparations are given by density matrices $\rho$ belonging to some $d-$dimensional Hilbert space $\mathcal{H}_d$, the transformations are unitary matrices $U$, and the measurements are given by positive operator-valued measure (POVM) $\{M_k\}$.   The probabilities are obtained using the Born rule,
$p(k|M,P,T)=\Tr(U\rho U^{\dagger}M_k)$. For the rest of this manuscript, we would not consider systems that evolve. Consequently, the scenario can be simplified to only prepare and measure experiments where the probabilities are denoted by $p(k|M,P)$.

\subsection{Ontological models framework}
The idea of ontic states or hidden variables was put forward by Bell in his seminal work in 1964 \cite{Bell}. The idea was put more rigorously by Spekkens and Harrigan \cite{Spekkens2} and subsequently by Leifer \cite{Leifer}. The ontological models framework provides a basis for realist extensions of any physical theory. The ontic states can be understood as some "real physical states" that generally might not be observed directly. The ontic state is denoted by $\lambda$, which belongs to the ontological state space denoted by $\Lambda$. In general, any preparation procedure $P$ prepares a distribution of such ontic states denoted by $\mu(\lambda|P)$ which can be understood as the probability of preparing the ontic state $\lambda$ using a preparation procedure $P$. It is required that $\Lambda$ is a measurable space along with the condition that $\int_{\Lambda}\mu(\lambda|P)d\lambda=1$ for all preparations $P$ which signifies that every preparation must always prepare some state $\lambda\in\Lambda$ along with the constraint $\mu(\lambda|P)\geq 0$. The set $\Omega_{P}\in\Lambda$ represents the set of $\lambda's$ for some preparation $P$ such that the probability density $\mu(\lambda|P)\ne0$. Thus, we also have $\int_{\Omega_P}\mu(\lambda|P)d\lambda=1$.

Any measurement in the ontological models framework measures the ontic state. The measurements are represented as response functions denoted by, $\xi(k|M,\lambda)$ which specifies the probability of obtaining the outcome $k$ of some measurement $M$ given the ontic state $\lambda$. It is also required that the response function is measurable along with the condition $\xi(k|M,\lambda)\geq 0$ and $\sum_k\xi(k|M,\lambda)=1$ for all $\lambda\in \Lambda$.

Using the above prescription, any probabilities obtained in the prepare and measure experiments can be represented as,
\begin{eqnarray}
p(k|M,P)=\int_{\Lambda}\xi(k|M,\lambda)\mu(\lambda|P)d\lambda
\end{eqnarray}
Note that we assume that the preparation and measurement are related via the ontic state $\lambda$. This is known as $\lambda-$mediation.

\subsection{Ontic-Epistemic distinction}
One of the most intriguing questions in the foundations of quantum theory is whether a quantum state is a physical state of reality or epistemological knowledge of some underlying reality. 
To answer this question, the first requirement is to have a realist perspective and an ontological models framework that can describe quantum theory. If every pure quantum state corresponds to some unique set of ontological states, then the quantum state is ontic or corresponds to a real physical state of reality. On the other hand, if two pure quantum states share some ontological states, the quantum state is epistemic or just a representation of underlying physical states of reality. 
There has been a large number of ontological models, in some of which the quantum states are ontic like Bohmian mechanics \cite{Bohm1, Bohm2}, Beltrametti–Bugajski model \cite{BG} and Bell model \cite{BM} to name a few. Some other ontological models of quantum theory in which the quantum state is not a physical reality but just an epistemic knowledge are Kochen-Specker model \cite{KS}, Spekkens toy theory \cite{Spekkens3}, LJBR model \cite{LJBR} and ABCL models \cite{ABCL} to name a few.

\subsection{Extremal preparations and measurements}

Any preparation that can not be described as a convex mixture of two different preparations are known as extremal or pure preparations. Mathematically, if
\begin{eqnarray}
\overline{P}=z P'+(1-z)P''
\end{eqnarray}
then, $z=1$ or $P'=P''=\overline{P}$. Similarly, any extremal or sharp measurement $\overline{M}=\{e_i\}$, where $e_i$ represent the effects of the measurement, can not be realised as some convex combination of two different measurements $M'=\{e_i'\}$ and $M''=\{e_i''\}$.
Mathematically, if
\begin{eqnarray}
\overline{M}=z M'+(1-z)M''
\end{eqnarray}
that is,
\begin{eqnarray}
e_i=ze_i'+(1-z)e_i''\quad \forall i
\end{eqnarray}
then, $z=1$ or $M'=M''=\overline{M}$. Every preparation can be expressed as a classical mixture of extremal preparations and within classical and quantum theory any non-extremal preparation can also be realised as a extremal preparation in some higher dimensional system. Similarly, any measurement can be expressed as a convex mixture of extremal measurement and within classical and quantum theory any unsharp measurement can be realised as a sharp measurement acting on some higher dimensional system \cite{NielsenChuang}.
Now, we proceed towards the main result of this manuscript.
\section{Results}
We present here a novel measure of classicality termed ontic-distinguishability, which is based upon two physical assumptions well-supported by experiments to date. In this section, first, we introduce the assumptions defining ontic-distinguishability. Next, we construct a simple operational task and show that the success probability of the task using classical strategies is bounded from above. We find states and measurements within quantum theory, which gives a higher success probability than achievable using classical strategies. We begin by stating the assumptions.

\subsection{Assumptions}
The first assumption defining ontic-distinguishability is given as,
\begin{assu}[No overlap]\label{NO} For two extremal preparations $\overline{P}_1$ and $\overline{P}_2$, the corresponding ontic distributions $\mu(\lambda|\overline{P}_1)$ and  $\mu(\lambda|\overline{P}_2)$ do not overlap. Mathematically,
\begin{eqnarray}
\frac{1}{2}\int_\Lambda \left|\mu(\lambda|\overline{P}_1)-\mu(\lambda|\overline{P}_2)\right|d\lambda=1
\end{eqnarray}
\end{assu}
This assumption stems from the fact that extremal preparations generate ontological states that correspond to physically real states in any classical theory. As discussed above, two physically real states can not share the same ontological states. For simplicity, let us consider classical Hamiltonian dynamics. In classical Hamiltonian dynamics, any state that lies within the phase space is described by the position and momentum of the system $(x,p)$ that are physically measurable quantities and real. Going back to the ontological models framework, considering the ontic states to be the position and momentum and a preparation which prepares a state with some fixed energy $E$ (which is a distribution over the phase space states). It is known that energy is a physically measurable quantity. Consequently, if we can know the position and momentum of the system, we can predict the system's energy. However, if the same phase space state belonged to two different energy states, then it would not have been a physically measurable quantity. Note that classical mixture represents a lack of knowledge or ignorance that can be regarded as epistemic. For a remark, restricting to quantum theory the assumption of no-overlap is equivalent to psi-onticity or the quantum state is a physical state of reality. However, here we do not refer to quantum theory but to the fact that no-overlap is a feature of any classical theory. The second assumption involves measurements,

\begin{assu}[Strong duality]\label{SD}
For every extremal measurement $\overline{M}$ the response function belonging to the $k^{th}$ outcome, $\xi(k|\overline{M},\lambda)=1$ for every $\lambda\in\displaystyle\bigcup\limits_i \Omega_{\overline{P}_i}$ and $0$ for all other $\lambda\notin\displaystyle\bigcup\limits_i \Omega_{\overline{P}_i}$ such that $p(k|\overline{M},\overline{P}_i)=1$ where $\overline{P}_i$ are extremal preparations.

\end{assu}

The above assumptions stem from the fact that any sharp measurement in the classical world is just observing the system's state perfectly. In the classical world, measurements are not supposed to change the concerned system. Thus unlike quantum theory, postulates of any classical theory do not give an account of how one observes the system as it is a natural assumption that any physical state can in principle be observed without altering the state. For example, considering classical Hamiltonian dynamics, sharp measurements on the phase space just read the system's state $(x,p)$. In general, the measurements might not exactly suggest the state of the system but specify some region of phase space where the system resides. Such scenarios can be understood as measurements whose different outcomes are classically coarse-grained to a single outcome. Any unsharp measurement within classical theories is just a probabilistic mixture of sharp measurements. Consequently, without loss of generality within classical theories, it can always be assumed that measurements are sharp on some larger dimensional system. 

The term "strong-duality" comes from the fact that physical measurements strictly depend on physical preparations. A weaker assumption which is quite familiar in the quantum foundations community is known as outcome-determinism, which says that for any sharp measurement $M$, the response function for any outcome $\xi(k|\overline{M},\lambda)=1$ or $0$ for any $\lambda$. Strong-duality is much stricter in the sense that the response function is $1$ only over a specific region of ontic states defined by the extremal preparations. Now, we construct the operational task whose success probability is upper bounded for any classical strategy.

\subsection{Operational Task}
The task consists of two players, Alice and Bob, who are not allowed to communicate with each other during the run of the experiment. Alice has access to a preparation box that consists of two inputs $y=\{0,1\}$ which generate two different preparations $P_y$. Bob has access to a measurement box  with two inputs $x=\{0,1\}$ specifying two different measurements $M_x$ with any number of outputs strictly greater than 1, labelled as $a=\{0,1,2,\ldots\}$. Now, Alice and Bob choose their inputs independent of each other. As a consequence, a system is prepared by the preparation box with Alice, which is then sent to Bob, who measures the system as specified by the measurement box. The task is repeated a large number of times to collect statistics.

The quantifier of the probability of winning in the above-described task, also known as success probability $P_S$, is given by 

\begin{eqnarray}\label{sp}
P_S=\sum_{a,x,y}c_{a,x,y}p(a|x,y)
\end{eqnarray}
where $c_{a,x,y}=\frac{1}{4}$ if $a=x.y$ and $0$ otherwise.

Recalling from Bell's theorem, to observe a quantum advantage, it is required that the correlations shared between Alice and Bob are no-signalling \cite{Bellnon}. Similarly, to observe an advantage using quantum theory and constrain classical theories in the above described operational task \eqref{sp}, we are required to impose the condition that the measurements are of rank-one, which can be operationally defined as

\begin{figure}
    \centering
    \includegraphics[scale=.6]{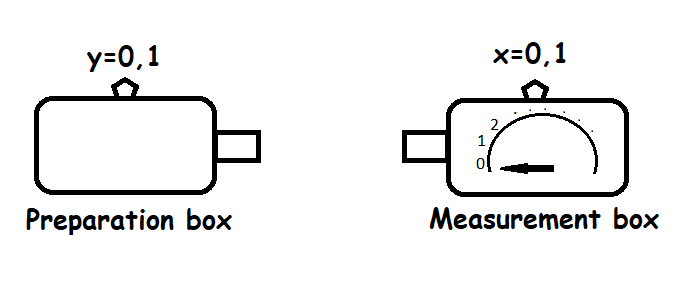}
    \caption{The operational task to observe quantum advantage. The preparation box consists of two inputs labelled by $y=0,1$ which generate two different preparations. The measurement box consists of two inputs labelled by $x=0,1$ which corresponds to two rank-one measurements of arbitrary number of outcomes labelled by $a=0,1,2,\ldots$}
    \label{fig1}
\end{figure}
\begin{udefn}[Rank-one measurements]\label{def3} For any measurement $M$, if two preparations $P$ and $P'$  give the same outcome with certainty, then $P$ and $P'$ can not be distinguished.

\end{udefn}
This means for any measurement $M$ and preparation $P$ and $P'$, if
\begin{eqnarray}
p(k|M,P)=p(k|M,P')=1\quad \text{for some $k$}
\end{eqnarray}
then $p(k|M,P)=p(k|M,P')$ for all outcomes $k$ of all measurements $M$. Note that, for sharp measurements if $p(k|M,P)=1$, then $P$ is an extremal preparation.

Constraining the strategies using the condition of measurements being rank-one def-\ref{def3}, we establish the following theorem on the success probability as defined in \eqref{sp}. Note that along with the above assumptions defining ontic-distinguishability, the implicit assumption of free will is required in any operational task. The assumption of free will ensures that Alice and Bob can freely choose respective inputs $(y,x)$ independent of each other's choices.
\begin{thm*}
For any theory which satisfies the assumptions of no overlap (assumption-\ref{NO}) and strong duality (assumption-\ref{SD}), the success probability \eqref{sp} is bounded from above as $P_S\leq \frac{3}{4}$.
\end{thm*}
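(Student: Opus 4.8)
The plan is to show that, under No Overlap (Def.~\ref{NO}) and Strong Duality (Def.~\ref{SD}), each of the four probabilities $p(0|0,0)$, $p(0|0,1)$, $p(0|1,0)$, $p(1|1,1)$ occurring in $P_S$ is forced to take the value $0$ or $1$, and that they cannot all equal $1$ simultaneously; since $P_S$ is $\frac{1}{4}$ times their sum, this yields $P_S\le\frac{3}{4}$. First I would reduce to the extremal case. Because $P_S$ is linear in each ontic distribution $\mu(\lambda|P_y)$, expanding an arbitrary preparation into a convex mixture of pure ones shows the maximum is attained with $P_0,P_1$ pure; and since Bob's measurements are rank-one (Def.~\ref{def3}), dilating them to projective measurements on a larger system (extending the preparations by a fixed ancilla, which keeps them pure and keeps distinct preparations perfectly distinguishable) lets us assume $M_0,M_1$ are rank-one projective, so that each outcome $a$ of $M_x$ is attained with certainty by some pure preparation $Q_{x,a}$. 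No Overlap then gives that distinct pure preparations have ontic supports disjoint up to a null set.

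Next comes the core step. Apply Strong Duality to $\overline M=M_x$ with outcome $a$: since $p(a|M_x,Q_{x,a})=1$, the response function $\xi(a|M_x,\cdot)$ must be the indicator of $\bigcap_i\Omega_{\overline P_i}$, the intersection over all pure $\overline P_i$ attaining outcome $a$ of $M_x$ with certainty, and $0$ outside it. If two such $\overline P_i$ were distinct, No Overlap would make that intersection null, forcing $\xi(a|M_x,\cdot)\equiv 0$ and contradicting $p(a|M_x,Q_{x,a})=1$; hence $Q_{x,a}$ is essentially unique and $\xi(a|M_x,\cdot)=\mathbbm{1}_{\Omega_{Q_{x,a}}}$. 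Consequently $p(a|x,y)=\int_\Lambda\mathbbm{1}_{\Omega_{Q_{x,a}}}\,\mu(\lambda|P_y)\,d\lambda$, and since $\mu(\cdot|P_y)$ is supported on $\Omega_{P_y}$, which by No Overlap meets $\Omega_{Q_{x,a}}$ only on a null set unless $P_y=Q_{x,a}$, this integral equals $1$ if $P_y=Q_{x,a}$ and $0$ otherwise. So $p(0|0,0),p(0|0,1),p(0|1,0),p(1|1,1)\in\{0,1\}$, being $1$ exactly when $P_0=Q_{0,0}$, $P_1=Q_{0,0}$, $P_0=Q_{1,0}$, $P_1=Q_{1,1}$ respectively.

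Finally, these four equalities cannot hold at once: the first two force $P_0=P_1=Q_{0,0}$, and then the last two force $Q_{1,0}=P_0=P_1=Q_{1,1}$, so $\xi(0|M_1,\cdot)$ and $\xi(1|M_1,\cdot)$ would both equal $\mathbbm{1}_{\Omega_{Q_{1,0}}}$ and hence sum to $2$ on the nonnull set $\Omega_{Q_{1,0}}$, contradicting $\sum_a\xi(a|M_1,\lambda)=1$. Therefore at most three of the four probabilities equal $1$ and the rest are $0$, so their sum is at most $3$ and $P_S\le\frac{3}{4}$. I expect the delicate part to be not this last combinatorial step but the two structural inputs it rests on: legitimizing the reduction to rank-one projective measurements in a theory-independent way, and verifying that Strong Duality genuinely collapses the relevant response functions to indicators of single pure-state supports — both of which hinge on every outcome of a rank-one projective measurement being attained with certainty by a pure preparation that No Overlap renders unique.
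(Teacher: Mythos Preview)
Your argument is correct but follows a genuinely different route from the paper. The paper never reduces $P_0,P_1$ to pure preparations; instead it uses $p(1|1,1)\le 1-p(0|1,1)$ to rewrite $P_S\le\frac{1}{4}+\frac{1}{4}\bigl(p(0|0,0)+p(0|1,0)+p(0|0,1)-p(0|1,1)\bigr)$, applies Strong Duality only to the outcome~$0$ of each measurement to get $\xi(0|\overline M_x,\cdot)=\mathbbm{1}_{\Omega_{\overline P_x}}$ for some pure $\overline P_x$, and then uses No Overlap on $\overline P_0,\overline P_1$ (not on $P_0,P_1$) to conclude that the masses $p_{i,j}=\int_{\Omega_{\overline P_i}}\mu(\lambda|P_j)\,d\lambda$ satisfy $p_{0,j}+p_{1,j}\le 1$, whence the bracket is $\le 2$. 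Your approach instead first passes to pure $P_0,P_1$ by extremality, then uses Strong Duality on \emph{three} outcomes (including outcome~$1$ of $M_1$) to force every term into $\{0,1\}$, and finishes with a combinatorial clash showing $Q_{1,0}=Q_{1,1}$ would make $\xi(0|M_1,\cdot)+\xi(1|M_1,\cdot)=2$ on a non-null set. Your route makes the obstruction very transparent and pinpoints exactly which identifications fail; the paper's route is more algebraic but has the advantage of working directly for arbitrary (mixed) $P_y$ without any reduction step. One small wording issue: your justification ``$P_S$ is linear in each ontic distribution $\mu(\lambda|P_y)$'' would, as stated, need the Convexity assumption (Def.~\ref{convex}), which is not among the theorem's hypotheses --- the clean way to phrase the reduction is purely operational (linearity of $p(a|x,\cdot)$ in the preparation), which is what you actually need.
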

\begin{proof}
Expanding the success probability \eqref{sp}, we have
\begin{eqnarray}\label{sp0}
P_S=\frac{1}{4} \left(p(0|0,0)+p(0|1,0)+p(0|0,1)+p(1|1,1)\right)
\end{eqnarray}
Since, the measurements consists of more than one outcome, we have $\sum_kp(k|1,1)=1$, which imposes that $p(0|1,1)+p(1|1,1)\leq 1$. Using this, we have

\begin{eqnarray}
P_S\leq\frac{1}{4} \left(p(0|0,0)+p(0|1,0)+p(0|0,1)-p(0|1,1)\right)+\frac{1}{4}\nonumber
\end{eqnarray}
Since there is no restriction on dimension, as discussed above any measurement can be realized as a sharp measurement on some larger system. Using this fact and the ontological models framework, we expand the above expression as
\begin{eqnarray}
p(0|0,0)+p(0|1,0)+p(0|0,1)-p(0|1,1)\nonumber\\=\int_\Lambda \xi(0|\overline{M}_0,\lambda)\left(\mu(\lambda|P_0)+\mu(\lambda|P_1)\right)d\lambda\nonumber\\+\int_\Lambda \xi(0|\overline{M}_1,\lambda)\left(\mu(\lambda|P_0)-\mu(\lambda|P_1)\right)d\lambda
\end{eqnarray}
Since the measurements are sharp, the assumption of strong duality (assumption-\ref{SD}) imposes that for all rank-one sharp measurements, the response function $\xi(0|\overline{M}_0,\lambda)=1$ for $\lambda\in\Omega_{\overline{P_0}}$ and $0$ for $\lambda\notin\Omega_{\overline{P_0}}$ and  $\xi(0|\overline{M}_1,\lambda)=1$ for $\lambda\in\Omega_{\overline{P_1}}$ and $0$ for $\lambda\notin\Omega_{\overline{P_1}}$ for some extremal preparations $\overline{P}_0$ and $\overline{P}_1$. This imposes that
\begin{eqnarray}
p(0|0,0)+p(0|1,0)+p(0|0,1)-p(0|1,1)\nonumber\\=\int_{\Omega_{\overline{P_0}}} (\mu(\lambda|P_0)+\mu(\lambda|P_1))d\lambda\nonumber\\+\int_{\Omega_{\overline{P_1}}} (\mu(\lambda|P_0)-\mu(\lambda|P_1))d\lambda
\end{eqnarray}
Now, the assumption of no-overlap (assumption-\ref{NO}) imposes that either ${\Omega_{\overline{P_0}}}$ and ${\Omega_{\overline{P_1}}}$ are disjoint for $\overline{P}_0$ and $\overline{P}_1$ being distinct or ${\Omega_{\overline{P_0}}}$ and ${\Omega_{\overline{P_1}}}$ are equivalent for $\overline{P}_0$ and $\overline{P}_1$ being same. Let's first consider the case when $\overline{P}_0$ and $\overline{P}_1$ are distinct.
\begin{eqnarray}\label{sp1}
p(0|0,0)+p(0|1,0)+p(0|0,1)-p(0|1,1)\nonumber\\=p_{0,0}+p_{0,1}+p_{1,0}-p_{1,1}\leq 2
\end{eqnarray}
where $p_{i,j}=\int_{\Omega_{\overline{P_i}}} \mu(\lambda|P_j)d\lambda$ and we used the fact that the ontological states generated from pure state preparations do not overlap, which imposes
\begin{eqnarray}
p_{0,0}+p_{1,0}&\leq& 1\quad \text{and},\nonumber\\
p_{0,1}+p_{1,1}&\leq&1
\end{eqnarray}
For the case when $\overline{P}_0$ and $\overline{P}_1$ being same, we have
\begin{eqnarray}\label{sp2}
p(0|0,0)+p(0|1,0)+p(0|0,1)-p(0|1,1)\nonumber\\=2p_{0,0}\leq 2
\end{eqnarray}
Thus, from \eqref{sp0}, \eqref{sp1} and \eqref{sp2} we can conclude that
\begin{eqnarray}
P_S\leq \frac{3}{4}.
\end{eqnarray}
This completes the proof.
\end{proof}
Since any classical theory satisfies the assumptions of no-overlap (assumption-\ref{NO}) and strong duality (assumption-\ref{SD}), the above theorem shows that using classical strategies, the maximum attainable value of the success probability is bounded from above by the value $P_s=\frac{3}{4}$. To better understand the above bound, let's consider a possible classical strategy. Let's say Alice sends her inputs $y=\{0,1\}$ to Bob. This can be realised by just sending a classical bit in which case, the constraint \ref{def3} imposes that $p(0|0,0)+p(0|0,1)\leq 1$ which in turn imposes from \eqref{sp} that $P_s\leq\frac{3}{4}$. Now, we show that there exist strategies in quantum theory which can attain a success probability $P_s=\frac{1}{2}+\frac{1}{2\sqrt{2}}$ which is strictly greater than $\frac{3}{4}$.

\subsection{Quantum advantage}
To witness quantum advantage, the preparations are chosen by Alice as, $P_0$ generates the state $\ket{\psi_0}=\ket{+}=\frac{1}{\sqrt{2}}\left(\ket{0}+\ket{1}\right)$ and $P_1$ generates the state $\ket{\psi_1}=\ket{0}$. The measurements are chosen by Bob represented as observables are given by, $\overline{M}_0=\frac{1}{\sqrt{2}}\left(\sigma_z+\sigma_x\right)$ and $\overline{M}_1=\frac{1}{\sqrt{2}}\left(\sigma_z-\sigma_x\right)$ where $\sigma_z$ and $\sigma_x$ corresponds to Pauli-$z$ and Pauli-$x$ matrices respectively. Calculating the necessary probabilities  to evaluate the success probability \eqref{sp} using the above chosen quantum states and measurements, we have
\begin{eqnarray}
p(0|0,0)=p(0|1,0)=p(0|0,1)=p(1|1,1)\nonumber\\=\cos^2\left(\frac{\pi}{8}\right)
\end{eqnarray}
Thus, the success probability is given by
\begin{eqnarray}
P_s=\cos^2\left(\frac{\pi}{8}\right)=\frac{1}{2}+\frac{1}{2\sqrt{2}}.
\end{eqnarray}
The existence of such strategies in quantum theory shows that quantum theory outperforms classical strategies in the above presented operational task. Also, such a violation shows that the assumptions of no-overlap (assumption-\ref{NO}) and strong duality (assumption-\ref{SD}) are not compatible with quantum theory. Thus, it is impossible to construct any such ontological model of quantum theory which have the features of no-overlap (assumption-\ref{NO}) and strong duality (assumption-\ref{SD}) together. However, some ontological models of quantum theory can exist that violate either of the two assumptions. Some notable ontological models of quantum theory that violate the assumption of strong duality (assumption-\ref{SD}) but satisfy the assumption of no-overlap (assumption-\ref{NO}) are  Beltrametti–Bugajski model \cite{BG} and Bell model \cite{BM}. Another class of ontological models like the Kochen-Specker model \cite{KS} and Spekkens toy theory \cite{Spekkens3} violate the assumption of no-overlap (assumption-\ref{NO}) but satisfy the assumption of strong duality (assumption-\ref{SD}). In the next section, we show how the notion of ontic-distinguishability implies some of the well-known notions of classicality.

\section{Relations to other notions of classicality}
In this section, we show that ontic-distinguishability implies the previously known notions of classicality. We need an additional assumption concerning non-extremal preparations.

\begin{defn}[Convexity]\label{convex} For any preparation $P$ that can be realised as a classical mixture of some preparations $P_i's$ such that $P=\sum_iz_iP_i$, then the corresponding ontological distribution is given by $\mu(\lambda|P)=\sum_iz_i\mu(\lambda|P_i)$ where $z_i\geq0$ and $\sum_iz_i=1$.
\end{defn}
The above definition holds true for any preparations $P_i$. However, for this work, we only consider $P_i's$ which are pure states, as any preparation which generates a mixed state can be understood as some classical mixture of pure states. As pure states represent physical states of reality in classical theories, any mixture of such pure states would be equivalently reinstated at the ontological level.

First, we prove that ontic-distinguishability and convexity implies preparation non-contextuality \cite{PNC}.
For this, let us first recall the definition of preparation non-contextuality. Preparation non-contextuality imposes that if two preparations are indistinguishable, then their respective ontological distributions are same. Mathematically, if
\begin{eqnarray}\label{PNC1}
p(k|M,P_1)=p(k|M,P_2)\quad \forall k,M
\end{eqnarray}
then, $\mu(\lambda|P_1)=\mu(\lambda|P_2)$ for all $\lambda\in \Lambda$.

Now, we would show that the assumptions of no overlap (assumption-\ref{NO}) and strong duality (assumption-\ref{SD}) along with the assumption of convexity def-\ref{convex} when imposed on the ontological state space imposes preparation non-contextuality. For this, let us note that as discussed before any measurement can be realized as a sharp measurement in some higher dimensional space. Expressing \eqref{PNC1} in the ontological models framework,
\begin{eqnarray}
\int_\Lambda \xi(k|\overline{M},\lambda)\mu(\lambda|P_1)d\lambda=\int_\Lambda \xi(k|\overline{M},\lambda)\mu(\lambda|P_2)d\lambda
\end{eqnarray}
for all outcomes $k$ of all sharp measurements $\overline{M}$. To show $\mu(\lambda|P_1)=\mu(\lambda|P_2)$, it is enough to consider rank-one measurements def-\ref{def3}. As discussed before, the assumption of strong duality (assumption-\ref{SD}) imposes that $\xi(k|\overline{M},\lambda)=1$ for $\forall\lambda\in\Omega_{\overline{P}}$ and $0$ for all other $\lambda\notin\Omega_{\overline{P}}$ for some pure state preparation $\overline{P}$,
\begin{eqnarray}
\int_{\Omega_{\overline{P}}} \mu(\lambda|P_1)d\lambda=\int_{\Omega_{\overline{P}}}\mu(\lambda|P_2)d\lambda
\end{eqnarray}
Now, assuming convexity def-\ref{convex} we have
\begin{eqnarray}\label{PNC2}
\sum_iz_{i,1}\int_{\Omega_{\overline{P}}} \mu(\lambda|\overline{P}_{i,1})d\lambda=\sum_iz_{i,2}\int_{\Omega_{\overline{P}}} \mu(\lambda|\overline{P}_{i,2})d\lambda
\end{eqnarray}
Assuming no overlap (assumption-\ref{NO}), it can be concluded that the above equality \eqref{PNC2} holds iff  $\overline{P}_{i,1}=\overline{P}_{i,2}=\overline{P}$ and $z_{i,1}=z_{i,2}$ for some $i$. Using a similar argument for all outcomes $k$ of all rank-one sharp measurements $\overline{M}$ we can conclude that $\mu(\lambda|P_1)=\mu(\lambda|P_2)$ for all $\lambda\in \Lambda$.

Now, we show that ontic-distinguishability and convexity also gives rise to a recently proposed notion of classicality termed bounded ontological distinctness \cite{BOD}. For this, let us first recall the definition of bounded ontological distinctness $(BOD_P)$.  Bounded ontological distinctness imposes that for distinguishability of preparations at the ontological level is same as distinguishability at the operational level. Mathematically, for n-preparations $P_x$ when $x\in{0,1,\ldots,n-1}$ if
\begin{eqnarray}\label{BOD1}
\max_{M}\frac{1}{n}\sum_xp(k=x|M,P_x)=p
\end{eqnarray}
then,
\begin{eqnarray}\label{BOD2}
\frac{1}{n}\int_{\Lambda}\max_x\left\{\mu(\lambda|P_x)\right\}d\lambda=p
\end{eqnarray}

Now, we would show that the assumptions of no overlap (assumption-\ref{NO}) and strong duality (assumption-\ref{SD}) along with the assumption of convexity def-\ref{convex} when imposed on the ontological state space imposes bounded ontological distinctness. Expressing \eqref{BOD1} in the ontological models framework,
\begin{eqnarray}
\max_{\overline{M}}\frac{1}{n}\sum_x\int_{\Lambda}\xi(k=x|\overline{M},\lambda)\mu(\lambda|P_x)=p
\end{eqnarray}
Assuming convexity def-\ref{convex}, any ontological distribution can be written as $\mu(\lambda|P_x)=\sum_{i}z_{i,x}\mu(\lambda|\overline{P}_i)$, we have
\begin{eqnarray}
\max_{\overline{M}}\frac{1}{n}\sum_x\sum_{i}z_{i,x}\int_{\Lambda}\xi(k=x|\overline{M},\lambda)\mu(\lambda|\overline{P}_{i})=p
\end{eqnarray}
Assuming strong duality (assumption-\ref{SD}) and no overlap (assumption-\ref{NO}) any measurement $\overline{M}$ that maximises the distinguishing probability must have the response functions $\xi(k=x|\overline{M},\lambda)$ such that they pick up the probabilities $z_{i,x}$ which is maximised for each $x$. Thus, we have
\begin{eqnarray}
\frac{1}{n}\sum_i\max_xz_{i,x}=p
\end{eqnarray}

Now, evaluating \eqref{BOD2} assuming convexity def-\ref{convex} we have,
\begin{eqnarray}
\frac{1}{n}\int_{\Lambda}\max_x\left\{\sum_iz_{i,x}\mu(\lambda|\overline{P}_{i})\right\}d\lambda\nonumber\\=\frac{1}{n}\sum_i\max_xz_{i,x}=p
\end{eqnarray}
where we arrived at the above expression by assuming that pure states do not overlap (assumption-\ref{NO}).

Using some well established results as stated in \cite{BOD} among different notions of classicality we have the following hierarchy,
\begin{eqnarray}
\text{ontic-distinguishability}&\implies& BOD_{P}\nonumber\\&\implies& \text{Prep non-contextuality}\nonumber\\&\substack{QT\\\implies}&\text{K-S non-contextuality}\nonumber\\&\substack{NS\\\implies}&  \text{Bell local-causality}
\end{eqnarray}

Thus assuming that convexity def-\ref{convex} holds true, any violation of Bell's local causality, Kochen-Specker non-contextuality, preparation non-contextuality or bounded ontological distinctness imposes violation of ontic-distinguishability. The assumption of convexity is natural for any classical theory; even for quantum theory, there is no successful ontological model that violates convexity.  

\section{Discussions}
In the presented work, from a realist perspective, we first constructed a novel notion of classicality termed ontic-distinguishability based on just two assumptions imposed on the ontological state space. We justified both the assumptions based on physical arguments, which are satisfied for any known physical theories of the classical world. Then, we constructed an operational task bounded for any theories that satisfy the assumptions of ontic-distinguishability. Kochen-Specker non-contextuality and Bell's local causality requires systems of atleast dimension three and four respectively to show a quantum advantage. Bounded ontological distinctness and preparation non-contextuality can be violated using only qubits; however, they require atleast three and four qubit preparations respectively to show a quantum advantage. Contrary to this, ontic-distinguishability can be violated using just a pair of qubits, making the above presented operational task much more applicable in practical scenarios. Then, we showed that the notion of ontic-distinguishability implies all the other known notions of classicality like Bell's local causality.

As argued by Spekkens \cite{Spekk2}, any good notion of classicality must have three properties; first, it can be experimentally tested, second, it constitutes a resource and third that it applies to a wide range of scenarios. The operational task provided in this work is a way to test the notion of ontic-distinguishability experimentally. Further, it applies to scenarios where minimal resources are required in terms of the number of preparations and measurements than previous notions of classicality which makes it applicable to a larger number of scenarios. Although any quantum advantage constitutes a resource, further works are needed to know the extent where violation of ontic-distinguishability serves as a resource.

In recent years, a lot of work has been put forward to establish generalised non-contextuality \cite{Spekk3} as the current best notion of classicality. Generalised non-contextuality is based upon the idea of Leibniz principle of identity of indiscernibles; that is, if two entities are equivalent at the operational level, then they are equivalent at the ontological level too. Any classical theory should indeed be non-contextual; however, it is still not clear whether every non-contextual theory should be considered classical as is argued in \cite{Barrett} that "a classical theory is one for which all systems are classical" or equivalently that all of the underlying systems can be explained using classical laws of nature. It could well be that some non-contextual theory exists that any known classical laws of nature can not explain. On the other hand, the notion of ontic-distinguishability is based on one of the basic ideas of the classical everyday world that physical entities are real and every sharp measurement observes the state of the system perfectly. It is not natural that our classical worldview is considered to be epistemic when generations of classical physicists from Galileo to Einstein have argued that facts of the world are real and independent of our knowledge of the world as Einstein says, "Do you really believe the moon is not there when you are not looking at it?". Quantum theory challenges this worldview, and we need to give up either the fact that physical states are real ((assumption-\ref{NO})) or quantum measurements are strongly dual ((assumption-\ref{SD})). Some non-contextual models, such as Spekkens toy theory, violate the assumption of no-overlap and thus, according to the notion of ontic-distinguishability, would not be considered classical. Bell's model, on the other hand, considers that physical states are real but give up the fact that quantum measurements are strongly dual. Some GPT's such as generalised local theory (GLT) and generalised no-signalling theory (GNST), are not classical with respect to ontic-distinguishability. Some fragments of quantum theory such as Gaussian quantum mechanics or linear quantum optics that have been shown to have epistemic ontological models \cite{Spekk4} which again should not be considered classical. However, with respect to ontic-distinguishability, some quantum phenomena such as interference and tunnelling which have analogues in classical electromagnetic theory are classical. 

Interestingly, when the assumptions of no-overlap and strong duality are imposed on the ontology, in the resulting generalised probabilistic theory (GPT), the state space is simplex, and the effect space is the dual of this simplex. This has been recently termed as simpliciality \cite{Spekk3}. This further strengthens the presented notion as suggested in \cite{Barrett} that a GPT should be considered classical if and only if the state space is simplex and the measurement effects are its dual. However, one of the recent works \cite{Spekk3} argue that a GPT should be deemed classical if the state space is simplex-embeddable and the effect space as its dual. However, they also prove that simplex-embeddability is equivalent to simpliciality if the GPT's satisfy no-restriction hypothesis \cite{Chiri}, that is, all states and effects that are permitted in the GPT are physically realisable. In the present work, we are concerned with only preparations and measurements that are physically realisable, and thus we expect that the GPT should be simplicial. It might always be possible that the corresponding GPT could be extended by adding more states and effects such that it would be simplex-embeddable and not simplicial, for instance, adding extra states to the GPT that can not be realised in any possible preparation as the assumption of no-overlap is concerned with the physical preparations only.

Several follow up questions arises from this work. From a quantum foundations perspective, ontic-distinguishability might give better lower bounds to the overlap of the ontological states of quantum theory than previously known bounds \cite{QO1,QO2}. Also, the above theorem serves as a no-go theorem for quantum theory. Further, the notion of ontic-distinguishability would give a better understanding of the set of quantum correlations. As was shown above, the violation of known notions of classicality would also imply a violation of ontic-distinguishability. As a consequence, the set of quantum correlations is enlarged, and this might give interesting results towards boundedness of quantum correlations \cite{Cola}. In this work, it is shown that ontic-distinguishability implies preparation non-contextuality; it would also be interesting to see the relation between ontic-distinguishability and generalised non-contextuality, which is infact non-contextuality for preparations, transformations as well as measurements. From quantum information perspective, the above presented operational task might give rise to cryptographic and communication tasks which need less resource to execute as entanglement is not needed to show a quantum advantage. Also, some device-independent schemes might be possible to construct employing the idea of the presented operational task.
\\
\begin{center}
    \textbf{Acknowledgements}
\end{center}

We would like to thank Remigiusz Augusiak and Debashish Saha for fruitful discussions. 
\\

\textbf{Funding} This work is supported by Foundation for Polish Science through the First Team project (No First TEAM/2017-4/31).
\\
\\

\textbf{Data Availability} There are no associated data for this paper.
\begin{center}
    \textbf{Declarations}
\end{center}

{\bf{Conflict of interest}} The author declares no conflict of interest.

\end{document}